\theoremstyle{plain}
\numberwithin{equation}{section}
\newtheorem{theorem}{Theorem}[section]
\newtheorem{lemma}[theorem]{Lemma}
\newtheorem{definition}[theorem]{Definition}
\newtheorem{notation}[theorem]{Notation}
\theoremstyle{remark}
\newtheorem{remark}[theorem]{Remark}
\numberwithin{equation}{section}
\newcommand{\cS}{{\mathcal S}}
\newcommand{\cH}{{\mathcal H}}
\newcommand{\cC}{{\mathcal C}}
\newcommand{\cI}{{\mathcal I}}
\newcommand{\cQ}{{\mathcal Q}}
\newcommand{\cD}{{\mathcal D}}
\newcommand{\ket}[1]{\left\vert #1\right\rangle}
\newcommand{\bra}[1]{\left\langle #1\right\vert}
\def\idty{{\mathchoice {\mathrm{1\mskip-4mu l}} {\mathrm{1\mskip-4mu l}} %
{\mathrm{1\mskip-4.5mu l}} {\mathrm{1\mskip-5mu l}}}}
\newcommand{\Tr}{\mathrm{Tr}}
\renewcommand{\vec}[1]{\boldsymbol{#1}}
\newcommand{\be}{\begin{equation}}
\newcommand{\ee}{\end{equation}}
\newcommand{\bea}{\begin{eqnarray}}
\newcommand{\eea}{\end{eqnarray}}
\newcommand{\beann}{\begin{eqnarray*}}
\newcommand{\eeann}{\end{eqnarray*}}
\begin{document}

\title{Quantum coherence, discord and correlation measures based on Tsallis relative entropy}
\author{Anna Vershynina}
\affil{\small{Department of Mathematics, Philip Guthrie Hoffman Hall, University of Houston, 
3551 Cullen Blvd., Houston, TX 77204-3008, USA}}
\renewcommand\Authands{ and }
\renewcommand\Affilfont{\itshape\small}

\date{\today}

\maketitle

\begin{abstract} Several ways have been proposed in the literature to define a coherence measure based on Tsallis relative entropy. One of them is defined as a distance between a state and a set of incoherent states with Tsallis relative entropy taken as a distance measure. Unfortunately, this measure does not satisfy the required strong monotonicity, but a modification of this coherence has been proposed that does. We introduce three new Tsallis coherence measures coming from a more general definition that also satisfy the strong monotonicity, and compare all five definitions between each other. Using three coherence measures that we discuss, one can also define a discord. Two of these have been used in the literature, and another one is new. We also discuss two correlation measures based on Tsallis relative entropy. We provide explicit expressions for all three discord and two correlation measure on pure states.  Lastly, we provide tight upper and lower bounds on two discord and correlations measures on any quantum state, with the condition for equality.
\end{abstract}

\section{Introduction}
Coherence is the fundamental property of quantum systems, that is used in thermodynamics \cite{A14, C15, L15}, transport theory \cite{RM09, WM13}, and quantum optics \cite{G63, SZ97}, among few applications. Recently, the problems involving coherence included quantification of coherence \cite{BC14, PC16, RPL16, R16, SX15, YZ16}, distribution \cite{RPJ16}, entanglement \cite{CH16, SS15}, operational resource theory \cite{CG16, CH16, DBG15, WY16},  correlations \cite{HH18, MY16, TK16}, with only a few references mentioned in each. See \cite{ZY18} for a more detailed review.

Here we focus on the problem of quantification of coherence. A number of ways has been proposed as a coherence measure, but only a few satisfy all necessary criteria \cite{BC14, ZY18, Zetal17}. A ``good" measure of coherence should satisfy a list of intuitive criteria, which include a strong monotonicity property - a monotonicity under selective incoherent CPTP maps. Measures that satisfy the strong monotonicty that have been introduced up to date, are based on $l_1$, relative entropy, Tsallis entropy, and real symmetric concave functions on a probability simplex.  Tsallis coherence measure can be defined several ways. Here we discuss five different ways to define a Tsallis coherence measure, two of them have been used in the literature before \cite{R16, ZY18}, and we also consider three  particular cases of a more broad definition \cite{Zetal17}.

The first coherence measure $\cC_q^I$ discussed here, was defined in \cite{R16}. The measure is defined as a convex roof, and the explicit expression for it was also given, see (\ref{eq:Ts-1}) for details. Authors showed that the measure satisfies several properties expected for a coherence measure, but not the strong monotonicity under incoherent selective measurements.

To rectify the situation, a new coherence measure, $\cC_q^{II}$ was defined in \cite{ZY18}, see (\ref{eq:Ts-2}). This measure does satisfy the strong monotonicity under incoherent selective measurements along with all previous ones. Now, one can argue for two sides here. On one hand, $\cC_q^I$ has a good intuition behind it, as it is defined as the distance between a given state and the set of incoherence states, with the relative Tsallis entropy as a a distance measure. The new measure is a more artificial looking one, constructed for a specific purpose of satisfying the list of given properties. On the other hand, $\cC_q^{II}$ is easy to compute, and it does satisfies the list of properties a good coherence measure should.

A class of coherence measures was defined in \cite{Zetal17} based on real symmetric concave functions on a probability simplex. For any such function one can explicitly define a coherence measure on pure states, and as a convex roof for  mixed states. We use $\cC_q^I$ and $\cC_q^{II}$ to define coherence on pure states, and use the convex roof construction to define two coherence measures on mixed states. Moreover, we define another coherence measure based on a familiar concave function.

Section \ref{sec:coherence} is used to compare all five measures between each other.

One can define a discord in different ways. Here we consider three definitions of the discord based on the first three coherence measures. This is not a complete list of possible definitions, see, for example,  \cite{J13} for another definition.

In \cite{L19}, the measure of quantum correlations $\cQ_q$ was defined as  a minimum Tsallis relative entropy over a set of classical-quantum states, a bigger set than a set of classically-correlated states. The authors also defined a new correlation measure based on the second definition of the coherence measure. The correlation measure in \cite{L19} is non-greater then discord, and we show that they coincide on pure states.

We present a tight lower bound for two discord and correlation measures, and show when it is saturated.
The  upper bound for the coherence measure $\cQ_q$ was derived in \cite{R16}, which is the upper bound for the discord as well. Here we show that the upper bound is tight for the discord as well, and present a tight lower bound for the second discord and correlation measures.
In summary, we present three results:
\begin{itemize}
\item comparison between five coherence measures, Theorems \ref{thm:I-III}, \ref{thm:I-IV}, \ref{thm:II-V};
\item explicit expression for three discord  and two correlation measures on pure states, Theorem \ref{lemma:1};
\item a tight lower bound on two discord and correlation measures, Theorem \ref{thm:lower};
\item a tight upper bound on two discord and correlation measures, Theorem \ref{thm:upper}.
\end{itemize}

\section{Preliminaries}
\subsection{Coherence}

Let $\cH$ be a $d$-dimensional Hilbert space. Let us fix a basis $\{\ket{j}\}_{j=1}^d$ of vectors in $\cH$.
\begin{definition} A state $\delta$ is called {\it incoherent} if it can be represented as follows
$$\delta=\sum_j \delta_j\ket{j}\bra{j}\ . $$
\end{definition}

In the case of multipartite system, let the corresponding Hilbert spaces be denoted as $\{\cH_k\}_1^N$. Suppose each Hilbert space is of dimension $d_k$ and has a fixed basis $\{\ket{j}\}_{j=1}^{d_k}$.

\begin{definition} A state $\delta$ on $\cH=\bigotimes_k \cH_k$ is called {\it incoherent} if it can be represented as follows
$$\delta=\sum_{\vec{j}} \delta_{\vec{j}}\ket{\vec{j}}\bra{\vec{j}}\ , $$
where $\vec{j}=(j_1, \dots, j_N)$ with $j_k=1, \dots, d_k$ is a vector of indices, and $\ket{\vec{j}}=\bigotimes_k \ket{j_k}=\ket{{j_1}}\otimes\cdots\otimes\ket{{j_N}}.$
\end{definition}

\begin{notation}
Denote the set of {\bf incoherent states} for a fixed basis $\{\ket{i}\}_i$ as $$\cI=\{\rho=\sum_jp_j\ket{j}\bra{j}\}\ .$$
Denote the set of {\bf separable states} as $$\cS=\{\rho_{AB}=\sum_{ij}p_{i,j}\sigma_i^A\otimes\sigma_j^B\ |\  \text{for any states } \sigma_i, \sigma_j\}\ .$$
Denote the set of {\bf classically correlated states} as $$\cC\cC=\{\rho_{AB}=\sum_{i,j}p_{i,j}\ket{ij}\bra{ij}=\sum_{i,j}p_{i,j}\ket{i}\bra{i}_A\otimes\ket{j}\bra{j}_B\ | \ \text{for any bases } \{\ket{i}_A\},  \{\ket{j}_B\}\}\ .$$
\end{notation}

A CPTP quantum channel is categorized into the following two classes.

\begin{definition} A CPTP map $\Phi$ with the following Kraus operators
$$\Phi(\rho)=\sum_n K_n \rho K_n^*\ , $$
is called {\bf the non-selective incoherent CPTP (ICPTP)} when the Kraus operators satisfy
$$K_n \cI K_n^*\subset \cI,\ \text{for all }n \ , $$
besides the regular completeness relation $\sum_n K_n^*K_n=\idty$.
\end{definition}

\begin{definition} A CPTP map $\Phi$ with the following Kraus operators
$$\Phi(\rho)=\sum_n K_n \rho K_n^*\ , $$
is called {\bf the selective ICPTP} when the Kraus operators satisfy
$$K_n \cI K_n^*\subset \cI,\ \text{for all }n \ , $$
besides the regular completeness relation $\sum_n K_n^*K_n=\idty$. Additionally, we record the outcomes of each measurement
$$\rho_n=\frac{K_n\rho K_n^*}{p_n},\ \ p_n=\Tr K_n\rho K_n^*\ . $$
\end{definition}

Any reasonable measure of coherence $\cC(\rho)$ should satisfy the following conditions
\begin{itemize}
\item (C1) $\cC(\rho)=0$ if and only if $\rho\in\cI$;
\item (C2) Monotonicity under non-selective ICPTP maps (monotonicity)
$$\cC(\rho)\geq \cC(\Phi(\rho))\ ; $$
\item (C3) Monotonicity under selective ICPTP maps  (strong monotonicity)
$$\cC(\rho)\geq \sum_n p_n \cC(\rho_n)\ , $$
where $p_n$ and $\rho_n$ are the outcomes and post-measurement states defined above;
\item (C4) Convexity, 
$$\sum_n p_n \cC(\rho_n)\geq \cC\left(\sum_n p_n\rho_n\right)\ , $$
for any sets of states $\{\rho_n\}$ and any probability distribution $\{p_n\}$.
\end{itemize}

These properties are parallel with the entanglement measure theory, where the average entanglement is not increased under the local operations and classical communication (LOCC). Notice that coherence measures that satisfy conditions (C3) and (C4) also satisfies condition (C2). 

Any distance measure $D$ between two quantum states, can induce a potential candidate for coherence. The distance-based coherence measure is defined as follows \cite{BC14}.

\begin{definition}
$$\cC_D(\rho):=\min_{\delta\in\cI}D(\rho, \delta)\ ,$$
i.s. the minimal distance between the state $\rho$ and the set of incoherent states $\cI$ measured by the distance $D$.
\end{definition}

\begin{itemize}
\item (C1) is satisfied whenever $D(\rho, \delta)=0$ iff $\rho=\delta$.
\item (C2) is satisfied whenever $D$ is contracting under CPTP maps, i.e. $D(\rho, \sigma)\geq D(\Phi(\rho), \Phi(\sigma))$.
\item (C4) is satisfied whenever $D$ is jointly convex.
\end{itemize}

\subsection{Tsallis entropy and relative entropies}
\begin{definition}
For $0<q\neq 1$, the {\bf Tsallis entropy} is defined as
$$S_q(\rho)=\frac{1}{q-1}(1-\Tr\rho^q)\ . $$
\end{definition}
In the limit $q\rightarrow 1$, the Tsallis entropy becomes von Neumann entropy
$$\lim_{q\rightarrow 1}S_q(\rho)=S(\rho):=-\Tr(\rho\log\rho)\ . $$

\begin{definition}
For $0<q\neq 1$, the {\bf Tsallis relative entropy} is defined as
$$D_q(\rho\|\sigma)=\frac{1}{q-1}(\Tr(\rho^q\sigma^{1-q})-1)\ . $$
\end{definition}

The Tsallis relative entropy satisfies the following properties:
\begin{itemize}
\item The Tsallis relative entropy is zero if and only if $\rho=\sigma$.
\item For $q\in(0,2]$, in \cite{HM11, R16}, it was shown that  the Tsallis relative entropy is monotone:
$$D_q(\Phi(\rho)\|\Phi(\sigma))\leq D_q(\rho\|\sigma)\ , $$
\item For $q\in(0,2]$ in \cite{HM11, R16}, it was shown that the Tsallis relative entropy is jointly convex:
$$D_q\left(\sum_n p_n\rho_n\|\sum_np_n\sigma_n\right)\leq \sum_n p_nD_q(\rho_n\|\sigma_n)\ . $$
\item For Kraus operators $K_n$, in \cite{R16}, it was shown that the Tsallis relative entropy satisfies 
$$\sum_n D_q\left(K_n\rho K_n^\dagger\|K_n\sigma K_n^\dagger \right)\geq \sum_n p_n^q q_n^{1-q}D_q(\rho_n\|\sigma_n)\ , $$
where $p_n=\Tr(K_n\rho K_n^\dagger)$, $q_n=\Tr(K_n\sigma K_n^\dagger )$, and $\rho_n=\frac{1}{p_n}K_n\rho K_n^\dagger $, $\sigma_n=\frac{1}{q_n}K_n\sigma K_n^\dagger $.
\end{itemize}

\subsection{Tsallis Coherence}
For a fixed basis $\{\ket{j}\}_j$, the Tsallis coherence measure can be defined numerous ways. Here we present five different ways to define a Tsallis coherence measure. In this section we assume that $0<q\neq 1$.

The first way to define a coherence, is according to \cite{R16}.
\begin{definition}
For a state $\rho$, define
\begin{equation}\label{eq:Ts-1}
\cC^I_q(\rho)=\min_{\delta\in\cI} D_q(\rho\|\delta)=\frac{1}{1-q}\left\{1-\left(\sum_j \bra{j}\rho^q\ket{j}^{1/q}\right)^q\right\}\ .
\end{equation}
Here $\cI $
is the set of incoherent states in a given basis $\{\ket{j}\}_j$.
\end{definition}

The state that reaches the minimum is
$$\delta_\rho=\frac{1}{N}\sum_j\bra{j}\rho^q\ket{j}^{1/q}\ket{j}\bra{j}\ , $$
where $N=\sum_j \bra{j}\rho^q\ket{j}^{1/q}$.

This coherence measure satisfies (C1), (C2), and (C4), but not (C3).
To remedy this situation, in \cite{ZY18} authors defined a new coherence measure based on Tsallis entropy as following
\begin{definition}
For a state $\rho$, define
\begin{equation}\label{eq:Ts-2}
{\cC}^{II}_q(\rho)=\frac{1}{1-q}\left\{1-\sum_j \bra{j}\rho^q\ket{j}^{1/q} \right\}\ .
\end{equation}
\end{definition}

This coherence measure satisfies all properties (C1)-(C4), but it lacks the beauty of the distance based measure.
Straightforward observations lead to the relation
\begin{equation}\label{eq:I-II}
\cC_q^I(\rho)\leq \cC^{II}(\rho)\ . 
\end{equation}

In  \cite{Zetal17}, authors defined a coherence measure for any real symmetric concave function on a probability simplex. Such coherence satisfies (C2)-(C4) properties. And for all choices below, (C1) also holds. 

Since the function $f(p)=\frac{1}{1-q}\{\sum_jp_j^{q}-1\}$ is real symmetric and concave on a probability simplex (here $p=(p_1,\dots, p_d)$ is a probability vector), we can define a  coherence measure as follows.

\begin{definition}
For a pure state $\ket{\psi}$, define
$${\cC}^{III}_q(\psi)=\frac{1}{1-q}\left\{\sum_j|\langle{j}|{\psi}\rangle|^{2q} -1\right\}\ . $$
And for the mixed state $\rho$, define
$${\cC}^{III}_q(\rho)=\min\left\{\sum_n\lambda_n{\cC}^{III}_q(\psi_n)\ | \ \rho=\sum_n\lambda_n\ket{\psi_n}\bra{\psi_n} \right\}\ . $$
\end{definition}

Since the function $f(p)=\frac{1}{1-q}\left\{1-\left(\sum_jp_j^{1/q} \right)^q\right\} $ is real symmetric and concave for $0<q\neq 1$ and a probability  vector $p=(p_j)$, we may define a coherence measure as follows.

\begin{definition}
For any state $\rho$, define
$$ {\cC}^{IV}_q(\rho)=\min\left\{\sum_n\lambda_n{\cC}^I_q(\psi_n)\ | \ \rho=\sum_n\lambda_n\ket{\psi_n}\bra{\psi_n} \right\}\ , $$
where $\cC_q^I(\psi)$ is defined  by (\ref{eq:Ts-1}).
\end{definition}

Since the function $f(p)=\frac{1}{1-q}\{1-\sum_jp_j^{1/q}\}$ is real symmetric and concave on a probability simplex (here $p=(p_1,\dots, p_d)$ is a probability vector), we can define another coherence measure as follows.

\begin{definition}
For any state $\rho$, define
$$ {\cC}^{V}_q(\rho)=\min\left\{\sum_n\lambda_n{\cC}^{II}_q(\psi_n)\ | \ \rho=\sum_j\lambda_n\ket{\psi_n}\bra{\psi_n} \right\}\ , $$
where $\cC^{II}_q(\psi)$ is defined  by (\ref{eq:Ts-2}).
\end{definition}

Clearly, from (\ref{eq:I-II}),
$$\cC_q^{IV}(\rho)\leq \cC^V_q(\rho)\ . $$

\subsection{Discord measure}
The discord measure based on  Tsallis relative entropy is defined as following.
\begin{definition}
For $0<q\neq 1$, and any state $\rho$, define
\begin{equation}\label{eq:discord}
\cD_q(\rho)=\min_{\delta\in\cC\cC}D_q(\rho\|\delta)\ .
\end{equation}
Here $\cC\cC$ denotes the set of classically-correlated states
 $$\cC\cC=\{\rho_{AB}=\sum_{i,j}p_{i,j}\ket{ij}\bra{ij}=\sum_{i,j}p_{i,j}\ket{i}\bra{i}_A\otimes\ket{j}\bra{j}_B\ : \ \text{for any bases } \{\ket{i}_A\},  \{\ket{j}_B\}\}\ .$$
 \end{definition}
 Note that we might have written
 $$\cD_q(\rho)=\min \cC_q^I(\rho)\ , $$
 where the minimum is taken over all bases $\{\ket{i}_A\}$ and $\{\ket{j}_B\}$. Even though we will talk about other discord measure, we will not be using notation $\cD^I_q(\rho)$ for simplicity.

The Tsallis discord satisfies the properties below, which follow from the definition of the discord and the properties of Tsallis relative entropy
\begin{enumerate}
\item $\cD_q(\rho)\geq 0$, with the equality if and only if $\rho\in\cC\cC$;
\item $\cD_q(\rho_{AB})=\cD_q(U_A\otimes U_B\rho U_A^\dagger\otimes U_B^\dagger)$.
\end{enumerate}

Since
$$\cD_q(\rho)=\min \cC_q^I(\rho)\ , $$
and from Definition \ref{eq:Ts-1} of $\cC_q^I(\rho)$, we have
\begin{equation}\label{eq:D-1}
 D_q(\rho)=\min\frac{1-N_D^q(\rho)}{1-q} \ ,
 \end{equation}
 where
\begin{equation}\label{eq:N-def}
N_D(\rho):=\sum_{ij}\bra{ij}\rho^q\ket{ij}^{1/q} \ ,
\end{equation}
and the minimum is taken over all bases $\{\ket{ij}_{AB}\}_{ij}$. 

Similar to the coherence measure $\cC_q^{II}$, one may define a new discord as follows
\begin{definition}\label{def:D2}
With the above notations,
$$\cD^{II}_q(\rho)=\min_{\delta\in\cC\cC}\frac{1-(\Tr(\rho^q\delta^{1-q}))^{1/q}}{1-q}= \min\frac{1-N_D(\rho)}{1-q}=\min\frac{1}{1-q}\left\{1-\sum_{ij}\bra{ij}\rho^q\ket{ij}^{1/q}\right\}\ .$$
\end{definition}

Clearly,
$$\cD_q(\rho)\leq\cD_q^{II}(\rho)\ . $$

Using coherence $\cC^{III}_q$, we may define a new discord measure. 

\begin{definition}
Define 
\begin{equation}\label{eq:discord}
\cD_q^{III}(\rho)=\min_{\ket{ij}}\cC^{III}_q(\rho)\ ,
\end{equation}
 where the minimum is taken over all bases $\{\ket{i}_A\}$ and $\{\ket{j}_B\}$. 
\end{definition}

In Theorem \ref{lemma:1} we present explicit expressions on all these discord measures on pure states. From Theorem \ref{thm:I-III}, we obtain
$$\cD_q(\rho)\leq\cD_q^{III}(\rho)\ . $$

\subsection{Correlation measure}
\begin{definition}
The quantum correlation measure based on  Tsallis relative entropy is defined as 
\begin{equation}\label{eq:corr}
\cQ_q(\rho)=\min_{\delta\in\cC\cQ}D_q(\rho\|\delta)\ .
\end{equation}
Here $\cC\cQ$ denotes the set of classical-quantum states
 $$\cC\cQ=\{\rho_{AB}=\sum_{i}p_i\ket{i}\bra{i}\otimes \sigma_i\ | \ \text{for some basis }\{\ket{i}\} \text{ and some states }\sigma_i\}\ .$$
 \end{definition}
 
 Since $\cC\cC\subset\cC\cQ$, we have
 $$\cQ_q(\rho)\leq \cD_q(\rho)\ . $$
 
The correlation measure satisfies the following properties, \cite{L19}
\begin{enumerate}
\item $\cQ_q(\rho)\geq 0$, with the equality if and only if $\rho\in\cC\cQ$;
\item $\cQ_q(\rho_{AB})=\cQ_q(U_A\otimes U_B\rho U_A^\dagger\otimes U_B^\dagger)$;
\item $\cQ_q(\rho_{AB})\geq \cQ_q(\Phi_B(\rho_{AB}))$, where $\Phi_B$ is a local CPTP map on system $B$.
\end{enumerate}

From  \cite{L19}, for any state $\rho_{AB}$ and $q\in(0,1)\cup(1,2]$,
$$\cQ_q(\rho)=\min\frac{1}{1-q}\left\{1-\left(\sum_{ij}\bra{j}_B\bra{i}_A\rho^q \ket{i}_A^{1/q}  \ket{j}_B \right)^q \right\}\ ,$$
where minimum is taken over all bases $\{\ket{i}_A\}_i$ on system $A$.
Denote 
\begin{equation}\label{eq:N_Q}
{N}_Q(\rho)=\sum_{ij}\bra{j}_B\bra{i}_A\rho^q \ket{i}_A^{1/q}  \ket{j}_B\ . 
\end{equation}
Therefore,
\begin{equation}\label{eq:Q}
\cQ_q(\rho)=\min\frac{1-N_Q^q(\rho)}{1-q}\ .
\end{equation}

\begin{lemma}\label{lemma:2}
Let $\ket{\psi_k}$ be the diagonal basis for $\rho$, i.e. let
$$\rho=\sum_k\lambda_k\ket{\psi_k}\bra{\psi_k}\ , $$
where $\lambda_k\geq 0$, and $\sum_k\lambda_k=1$. Also, let $\ket{\xi_{kn}}_A$ and $\ket{\xi_{kn}}_B$ be the Hilbert-Schmidt basis for the state $\ket{\psi_k}$, i.e.
\begin{equation}\label{eq:psi}
\ket{\psi_k}=\sum_n\alpha_{kn}\ket{\xi_{kn}\xi_{kn}}_{AB}\ . 
\end{equation}
Here $\alpha_{kn}\geq 0$ and for every $k$, and $\sum_n\alpha_{kn}^2=1$. 
Then
\begin{equation}\label{eq:N-Q}
N_Q(\rho)=\sum_{ik}\lambda_k\left(\sum_n\alpha_{kn}^2|\langle{i}|\xi_{kn}\rangle|^2 \right)^{1/q} \ .
\end{equation}
\end{lemma}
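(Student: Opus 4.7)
The plan is to evaluate $N_Q(\rho) = \sum_i \Tr_B\bigl[(\bra{i}_A\rho^q\ket{i}_A)^{1/q}\bigr]$ by explicitly diagonalizing the $\cH_B$-operator $\bra{i}_A\rho^q\ket{i}_A$ for each fixed $i$.  I would begin by using the spectral representation $\rho^q = \sum_k \lambda_k^q \ket{\psi_k}\bra{\psi_k}$, which is valid because $\{\ket{\psi_k}\}$ is orthonormal.  Substituting the Schmidt decomposition (\ref{eq:psi}) and contracting with $\bra{i}_A$ yields vectors in $\cH_B$,
$$\ket{v_{ki}} := \bra{i}_A\ket{\psi_k} = \sum_n \alpha_{kn}\langle i|\xi_{kn}\rangle_A \ket{\xi_{kn}}_B,$$
so $\bra{i}_A\rho^q\ket{i}_A = \sum_k \lambda_k^q \ket{v_{ki}}\bra{v_{ki}}$, and the orthonormality of $\{\ket{\xi_{kn}}_B\}_n$ at fixed $k$ immediately gives $\|v_{ki}\|^2 = \sum_n \alpha_{kn}^2|\langle i|\xi_{kn}\rangle|^2$, which is precisely the inner bracket in (\ref{eq:N-Q}).

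The decisive step is to treat the preceding sum as a genuine spectral decomposition of $\bra{i}_A\rho^q\ket{i}_A$.  Assuming the family $\{\ket{v_{ki}}\}_k$ is mutually orthogonal in $\cH_B$ for each fixed $i$, the operator has eigenvalues $\lambda_k^q\|v_{ki}\|^2$ on the unit vectors $\ket{v_{ki}}/\|v_{ki}\|$; raising to the $1/q$ power replaces these by $\lambda_k\|v_{ki}\|^{2/q}$, and the partial trace over $B$ then produces $\sum_k \lambda_k\|v_{ki}\|^{2/q}$.  Summing over $i$ gives the claimed expression (\ref{eq:N-Q}).

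The main technical obstacle lies in justifying the orthogonality $\langle v_{ki}|v_{k'i}\rangle = 0$ for $k\neq k'$.  This is not a direct consequence of $\langle\psi_k|\psi_{k'}\rangle = \delta_{kk'}$, which only yields the summed identity $\sum_i \langle v_{ki}|v_{k'i}\rangle = \delta_{kk'}$.  Promoting this summed relation to a termwise one requires additional structure, such as an adapted alignment of the Schmidt bases across different eigenvectors or a canonical choice within degenerate eigenspaces of $\rho$.  Once that orthogonality is in hand, the remaining manipulations of norms and traces are mechanical; handling this step is the principal hurdle of the proof.
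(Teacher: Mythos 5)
Your computation follows exactly the paper's route: insert $\rho^q=\sum_k\lambda_k^q\ket{\psi_k}\bra{\psi_k}$, form the $B$-vectors $\ket{v_{ki}}=\bra{i}_A\ket{\psi_k}$ with $\|v_{ki}\|^2=\sum_n\alpha_{kn}^2|\langle i|\xi_{kn}\rangle|^2$, and then take the $1/q$-th power of $\bra{i}_A\rho^q\ket{i}_A=\sum_k\lambda_k^q\ket{v_{ki}}\bra{v_{ki}}$ termwise before tracing over $B$. The paper performs this last step without comment, writing
$$\sum_j\bra{j}\Bigl(\sum_k\lambda_k^q m_{ik}\ket{\phi_{ik}}\bra{\phi_{ik}}\Bigr)^{1/q}\ket{j}=\sum_{k}\lambda_k m_{ik}^{1/q}\ ,$$
which is valid only if the unit vectors $\{\ket{\phi_{ik}}\}_k$ are mutually orthogonal for each fixed $i$. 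You are right to single this out as the crux, and right that it does not follow from $\langle\psi_k|\psi_{k'}\rangle=\delta_{kk'}$, which only gives the summed relation $\sum_i\langle v_{ki}|v_{k'i}\rangle=\delta_{kk'}$. So your proposal is incomplete precisely where the paper's own argument is silently incomplete: neither closes the gap.

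The concern is not hypothetical. Take $\ket{\psi_1}=\tfrac{1}{\sqrt2}(\ket{00}+\ket{11})$ and $\ket{\psi_2}=\tfrac12(\ket{00}-\ket{11}+\ket{01}+\ket{10})$; these are orthonormal, yet $\bra{0}_A\ket{\psi_1}=\tfrac{1}{\sqrt2}\ket{0}_B$ and $\bra{0}_A\ket{\psi_2}=\tfrac12(\ket{0}_B+\ket{1}_B)$ have inner product $\tfrac{1}{2\sqrt2}\neq 0$, so $\bra{0}_A\rho^q\ket{0}_A$ is \emph{not} diagonalized by the $\ket{\phi_{0k}}$ and $\Tr\bigl[(\bra{0}_A\rho^q\ket{0}_A)^{1/q}\bigr]\neq\sum_k\lambda_k m_{0k}^{1/q}$ for generic $\lambda_k$ and $q$. (A related symptom: when $\rho$ has degenerate eigenvalues, the right-hand side of \eqref{eq:N-Q} depends on the choice of eigenbasis while $N_Q(\rho)$ does not.) To salvage the statement one needs either an additional structural hypothesis guaranteeing the termwise orthogonality, or a replacement of the equality by an inequality in the appropriate direction (e.g.\ via concavity/convexity of $x\mapsto x^{1/q}$ and the operator Jensen inequality), which is all that is actually needed for the lower bound in Theorem \ref{thm:lower}. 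As written, your proof cannot be completed along the announced lines, but you have correctly located the defect, and it is one the paper shares.
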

\begin{proof}
By definition (\ref{eq:N_Q}), we have
$${N}_Q(\rho)=\sum_{ij}\bra{j}_B\left(\sum_k\lambda_k^q\bra{i}_A\ket{\psi_k}\bra{\psi_k}\ket{i}_A \right)^{1/q} \ket{j}_B\ . $$
Denote $$\ket{\Psi_{ik}}_B:=\bra{i}_A\ket{\psi_k}_{AB}=\sum_n\alpha_{kn}\langle{i}|{\xi_{kn}}\rangle\ket{\xi_{kn}}_B=:\sqrt{m_{ik}}\ket{\phi_{ik}}\ ,$$
where
$$m_{ik}=\sum_n\alpha_{kn}^2|\langle{i}|{\xi_{kn}}\rangle|^2, \ \text{ and }\ \langle{\phi_{ik}}|{\phi_{ik}}\rangle=1 \ .$$
Then
$$N_Q(\rho)=\sum_{ij}\bra{j}\left( \sum_k\lambda_k^q m_{ik}\ket{\phi_{ik}}\bra{\phi_{ik}}\right)^{1/q} \ket{j}=\sum_{ijk}\lambda_k m_{ik}^{1/q}|\langle{j}|{\phi_{ik}}\rangle|^2=\sum_{ik}\lambda_k m_{ik}^{1/q}\ .$$
By definition of $m_{ik}$, we obtain the statement of the lemma.
\end{proof}

\begin{definition}\label{def:Q2}
In  \cite{L19}, authors defined a new correlations measure, 
$$\cQ^{II}_q(\rho)=\min\frac{1-N_Q(\rho)}{1-q}=\min\frac{1}{1-q}\left\{1-\sum_{ij}\bra{j}_B\bra{i}_A\rho^q \ket{i}_A^{1/q}  \ket{j}_B \right\}\ .$$
\end{definition}

\begin{lemma}\label{lemma:QD2}
For any state $\rho$,
$$\cQ_q^{II}(\rho)\leq \cD_q^{II}(\rho)\ . $$
\end{lemma}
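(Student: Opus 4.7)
The plan is to establish the pointwise inequality
\begin{equation*}
\frac{1-N_Q(\rho)}{1-q}\le \frac{1-N_D(\rho)}{1-q}
\end{equation*}
for every fixed pair of bases $\{\ket{i}_A\}$, $\{\ket{j}_B\}$, and then take the minimum over bases on each side to deduce the lemma from Definitions \ref{def:D2} and the representation of $\cD_q^{II}$ and $\cQ_q^{II}$ via $N_D$ and $N_Q$.

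The first step is to notice that $N_Q$ is in fact \emph{independent of the $B$-basis}. Indeed, since $\sum_j \bra{j}_B X\ket{j}_B=\Tr_B X$ for any operator $X$ on $B$, setting $A_i:=\bra{i}_A\rho^q\ket{i}_A$, a positive operator on the $B$-system, one has
\begin{equation*}
N_Q(\rho)=\sum_i \Tr_B\bigl(A_i^{1/q}\bigr),\qquad N_D(\rho)=\sum_i\sum_j \bigl(\bra{j}_B A_i\ket{j}_B\bigr)^{1/q}.
\end{equation*}
Thus the $i$-th summand of $N_Q$ is $\sum_k \mu_k^{1/q}$, where $\{\mu_k\}$ are the eigenvalues of $A_i$, while the $i$-th summand of $N_D$ is $\sum_j d_j^{1/q}$ with $d_j=\bra{j}_B A_i\ket{j}_B$ the diagonal entries of $A_i$ in the chosen $B$-basis.

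Next I would invoke Schur's theorem, which gives $(d_j)\prec(\mu_k)$, together with the Hardy--Littlewood--P\'olya inequality applied to $f(x)=x^{1/q}$. When $q>1$, the exponent $1/q\in(0,1)$ makes $f$ concave, so $\sum_j d_j^{1/q}\ge\sum_k\mu_k^{1/q}$ and hence $N_D\ge N_Q$. When $q\in(0,1)$, the exponent $1/q>1$ makes $f$ convex, so $\sum_j d_j^{1/q}\le\sum_k\mu_k^{1/q}$ and hence $N_D\le N_Q$. In either regime the sign of $N_D-N_Q$ matches the sign of $1-q$, yielding
\begin{equation*}
\frac{1-N_Q}{1-q}-\frac{1-N_D}{1-q}=\frac{N_D-N_Q}{1-q}\le 0.
\end{equation*}

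To conclude, I would take $\min_{\{\ket{j}_B\}}$ on the right (the left being $B$-basis independent) and then $\min_{\{\ket{i}_A\}}$ on both sides, obtaining $\cQ_q^{II}(\rho)\le\cD_q^{II}(\rho)$. The only delicate point is the case split $q<1$ versus $q>1$: the direction of the majorization inequality reverses, but so does the sign of $1-q$, and these two reversals cancel. I expect no serious obstacle beyond correctly identifying $N_Q$ as a partial trace so that Schur--Horn applies cleanly to each $A_i$.
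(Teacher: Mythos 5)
Your proof is correct, but it takes a genuinely different route from the paper's. The paper's own argument is a two-line transfer: since $x\mapsto x^q$ is monotone, the basis optimizing $N_Q$ (resp.\ $N_D$) simultaneously optimizes both $\cQ_q$ and $\cQ_q^{II}$ (resp.\ both $\cD_q$ and $\cD_q^{II}$), so the already-established inequality $\cQ_q(\rho)\le\cD_q(\rho)$ (a consequence of $\cC\cC\subset\cC\cQ$) forces $N_Q^*\ge N_D^*$ for $q<1$ (reversed for $q>1$) and hence $\cQ_q^{II}(\rho)\le\cD_q^{II}(\rho)$. You instead prove the stronger, basis-pointwise statement that $\frac{1-N_Q}{1-q}\le\frac{1-N_D}{1-q}$ for every fixed pair of bases, by recognizing $\sum_j\bra{j}_B(\cdot)\ket{j}_B$ as a partial trace and applying Schur--Horn together with the majorization inequality for the convex/concave function $x^{1/q}$ to each positive operator $A_i=\bra{i}_A\rho^q\ket{i}_A$; the two sign reversals (direction of the majorization inequality and sign of $1-q$) cancel exactly as you say, and the final minimization step is sound because $N_Q$ is $B$-basis independent. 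What your approach buys: it is self-contained (it does not presuppose $\cQ_q\le\cD_q$), and it extends to arbitrary mixed states and arbitrary bases the comparison $N_D\le N_Q$ for $q\in(0,1)$ (reversed for $q>1$) that the paper establishes only for pure states inside the proof of Theorem \ref{lemma:1} (inequality (\ref{eq:NDQ})), where the rank-one structure of $R_i$ substitutes for majorization. The paper's route is shorter but leans on the prior inequality between $\cQ_q$ and $\cD_q$.
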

\begin{proof}
Since $x^q$ is monotone, the minimum for both $\cQ_q$ and $\cQ^{II}_q$ is achieved on the same basis. Similarly, from  the expression (\ref{eq:D-1}) and Definition \ref{def:D2}, the minimum for both  $\cD_q$ and $\cD^{II}_q$ is achieved on the same basis. Since $\cQ_q(\rho)\leq \cD_q(\rho) $, the Lemma follows.
\end{proof}

\section{Relations between coherence measures}\label{sec:coherence}

Here we compare all five coherence measure with each other.

\begin{theorem}\label{thm:I-III}
For $q\in(0,1)$, 
$$\cC_q^I(\psi)\leq {\cC}^{III}_q(\psi)\ . $$
And clearly, therefore, ${\cC}^{IV}_q(\rho)\leq  {\cC}^{III}_q(\rho).$
\end{theorem}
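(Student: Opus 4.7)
The plan is to reduce the pure-state inequality to a classical statement about the probability vector of basis-expansion weights, and then settle that statement via H\"older's inequality combined with AM--GM. Since $\rho=\ket{\psi}\bra{\psi}$ is a rank-one projector, one has $\rho^q=\rho$, so $\bra{j}\rho^q\ket{j}=p_j$ with $p_j:=|\langle j|\psi\rangle|^2$. The defining formulas then collapse to
$$
\cC_q^I(\psi)=\frac{1-\bigl(\sum_j p_j^{1/q}\bigr)^{q}}{1-q},\qquad \cC_q^{III}(\psi)=\frac{\sum_j p_j^q-1}{1-q}.
$$
Setting $A:=\sum_j p_j^q$ and $B:=\bigl(\sum_j p_j^{1/q}\bigr)^q$, and using $1-q>0$ for $q\in(0,1)$, the desired inequality $\cC_q^I(\psi)\leq \cC_q^{III}(\psi)$ is equivalent to $A+B\geq 2$.

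The crux is the multiplicative estimate $AB\geq 1$; combined with AM--GM this yields $A+B\geq 2\sqrt{AB}\geq 2$. For $AB\geq 1$ I would apply H\"older's inequality to the sequences $(p_j^q)$ and $(p_j^{1/q})$ with exponents $\alpha=1/(q+1)$ and $\beta=q/(q+1)$ (note $\alpha+\beta=1$, both positive), chosen precisely so that $(p_j^q)^\alpha(p_j^{1/q})^\beta=p_j^{q\alpha+\beta/q}=p_j$. The H\"older left-hand side then collapses to $\sum_j p_j=1$, and raising the resulting estimate to the $(q+1)$-th power produces exactly $AB\geq 1$. The only spot that requires care is the exponent bookkeeping; this is the main obstacle but a purely mechanical one. (An alternative derivation of $AB\geq 1$ uses the fact that the R\'enyi entropy $H_\alpha(p)$ is non-increasing in $\alpha$, since $\log A+\log B=(1-q)\bigl(H_q(p)-H_{1/q}(p)\bigr)\geq 0$ for $q<1<1/q$.)

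The stated mixed-state consequence $\cC_q^{IV}(\rho)\leq\cC_q^{III}(\rho)$ is then immediate from the convex-roof structure: if $\rho=\sum_n\lambda_n\ket{\psi_n}\bra{\psi_n}$ is an optimal decomposition in the definition of $\cC_q^{III}(\rho)$, the pure-state inequality applied term-by-term gives
$$
\cC_q^{IV}(\rho)\leq \sum_n\lambda_n\cC_q^I(\psi_n)\leq \sum_n\lambda_n\cC_q^{III}(\psi_n)=\cC_q^{III}(\rho),
$$
since the left infimum is taken over a set that contains this particular decomposition.
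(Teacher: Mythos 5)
Your proposal is correct, and it reaches the key scalar inequality by a genuinely different route than the paper. Both arguments reduce the claim to showing $A+B\geq 2$ with $A=\sum_j p_j^q$ and $B=\bigl(\sum_j p_j^{1/q}\bigr)^q$, $p_j=|\langle j|\psi\rangle|^2$, since $\cC_q^{III}(\psi)-\cC_q^I(\psi)=\frac{1}{1-q}(A+B-2)$ and $1-q>0$. The paper then asserts that $A+B$ is monotone decreasing in $q$ on $(0,1)$ ("easily seen by taking the derivative") with limiting value $2$ at $q=1$, a calculus argument whose verification is left to the reader and is not entirely mechanical, since $q$ enters both the summand exponents and the outer power. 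You instead prove the static inequality $AB\geq 1$ via H\"older with exponents $\alpha=1/(q+1)$, $\beta=q/(q+1)$ (the exponent bookkeeping checks out: $q\alpha+\beta/q=1$, so the left side collapses to $\sum_j p_j=1$ and raising to the power $q+1$ gives $AB\geq 1$), and conclude $A+B\geq 2\sqrt{AB}\geq 2$ by AM--GM; your R\'enyi-monotonicity remark is an equivalent restatement of the same fact. Your route is self-contained, avoids differentiation in the parameter, and makes the mechanism transparent (it is exactly the statement $H_q(p)\geq H_{1/q}(p)$ for $q<1$); the paper's route, if the monotonicity is fully verified, gives slightly more information, namely that the gap $\cC_q^{III}-\cC_q^I$ is controlled by the distance of $q$ from $1$. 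Your deduction of $\cC_q^{IV}(\rho)\leq \cC_q^{III}(\rho)$ from the pure-state case by evaluating the $\cC_q^{IV}$ infimum on the decomposition optimal for $\cC_q^{III}$ is exactly what the paper intends by "clearly, therefore," and is correct.
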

\begin{proof}
The difference is
$$\widehat{\cC}_q(\psi)-\cC_q(\psi)=\frac{1}{1-q}\left\{\sum_j |\bra{j}\ket{\psi}|^{2q}+\left(\sum_j|\bra{j}\ket{\psi}|^{2/q}\right)^q-2\right\}\ .$$
The expression in the parenthesis is monotone decreasing in $q\in(0,1)$, which can easily be seen by taking the derivate with respect to $q$. Therefore, the minimal value of the expression in the parenthesis occurs at $q=1$, which is exactly $0$.
\end{proof}

\begin{theorem}\label{thm:I-IV}
For $0<q\neq 1$,
$$\cC_q^I(\rho)\leq \cC_q^{IV}(\rho)\ . $$
\end{theorem}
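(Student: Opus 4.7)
The plan is to derive the inequality from the convexity of $\cC_q^I$, applied to a pure-state decomposition. Specifically, if I can show that $\cC_q^I$ is convex on the set of states, then for any decomposition $\rho=\sum_n \lambda_n \ket{\psi_n}\bra{\psi_n}$ I immediately get
$$\cC_q^I(\rho)=\cC_q^I\Big(\sum_n \lambda_n \ket{\psi_n}\bra{\psi_n}\Big)\leq \sum_n \lambda_n \cC_q^I(\psi_n),$$
and taking the infimum over all such decompositions on the right-hand side yields $\cC_q^I(\rho)\leq \cC_q^{IV}(\rho)$.

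The first step is thus to establish the convexity of $\cC_q^I$. Given $\rho=\sum_n \lambda_n \rho_n$, I would pick an incoherent state $\delta_n$ achieving the minimum in $\cC_q^I(\rho_n)=D_q(\rho_n\|\delta_n)$ (the explicit formula for the minimizer $\delta_{\rho_n}$ was given after equation \eqref{eq:Ts-1}, so this minimum is attained). The mixture $\delta=\sum_n \lambda_n \delta_n$ is again incoherent, being a convex combination of diagonal states in the fixed basis. Then the joint convexity of the Tsallis relative entropy $D_q$ (valid for $q\in(0,2]$) gives
$$\cC_q^I(\rho)\leq D_q\Big(\sum_n \lambda_n \rho_n\, \Big\|\, \sum_n \lambda_n \delta_n\Big)\leq \sum_n \lambda_n D_q(\rho_n\|\delta_n)=\sum_n \lambda_n \cC_q^I(\rho_n),$$
which is precisely property (C4) for $\cC_q^I$.

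The second step is a one-line specialization: apply the convexity inequality with $\rho_n=\ket{\psi_n}\bra{\psi_n}$ ranging over any convex decomposition of $\rho$ into pure states, and then minimize the right-hand side over all such decompositions to recover $\cC_q^{IV}(\rho)$ on the right.

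The main subtlety — and the only place I expect to have to be careful — is the range of $q$. Joint convexity of $D_q$ is available only for $q\in(0,2]$, whereas the theorem is stated for all $0<q\neq 1$. For $q\in(0,2]$ the argument above is complete; for $q>2$ one would need either a direct proof of convexity of $\cC_q^I$ not passing through joint convexity of $D_q$, or else a restriction of the stated range. I would flag this and, if the statement is really meant for all $q>0$, attempt to prove convexity of $\cC_q^I$ directly from its closed-form expression $\frac{1}{1-q}\{1-(\sum_j\bra{j}\rho^q\ket{j}^{1/q})^q\}$ by analyzing the concavity/convexity of the map $\rho\mapsto \sum_j \bra{j}\rho^q\ket{j}^{1/q}$ on states — but this is exactly the step I would expect to be technically delicate for $q>2$.
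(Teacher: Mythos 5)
Your proposal is correct where it applies, but it takes a genuinely different route from the paper. The paper argues by direct computation with the closed-form expressions: it takes the optimal pure-state decomposition $\rho=\sum_k\lambda_k\ket{\psi_k}\bra{\psi_k}$ for $\cC_q^{IV}$, writes $\bra{j}\rho^q\ket{j}=\sum_k\lambda_k^q|\bra{j}\ket{\psi_k}|^2$, applies the elementary power-sum inequality $\bigl(\sum_k b_k^q\bigr)^{1/q}\geq\sum_k b_k$ for $q<1$ (reversed for $q>1$) with $b_k=\lambda_k|\bra{j}\ket{\psi_k}|^{2/q}$, and finishes with convexity of $x\mapsto\frac{1}{1-q}(1-x^q)$. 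Because that computation never invokes joint convexity of $D_q$, it nominally covers all $0<q\neq 1$, which is why the theorem is stated in that generality; your argument, which routes through property (C4) via joint convexity of $D_q$, is confined to $q\in(0,2]$, exactly as you flag --- and the paper itself only asserts joint convexity (and hence (C4) for $\cC_q^I$) on that range, so for $q\in(0,2]$ your first step could simply cite the paper's earlier claim that $\cC_q^I$ satisfies (C4). What your route buys is robustness: it is the standard ``jointly convex distance to a convex set dominates the convex roof'' argument, and it sidesteps a delicate point in the paper's computation, namely that the identity $\bra{j}\rho^q\ket{j}=\sum_k\lambda_k^q|\bra{j}\ket{\psi_k}|^2$ holds only when the $\ket{\psi_k}$ are orthogonal (the spectral decomposition), not for an arbitrary optimal decomposition. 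If the full range $0<q\neq 1$ is really intended, then the extra work you anticipate at the end (a direct convexity proof for $\cC_q^I$, or a corrected version of the paper's direct comparison of closed forms) is indeed required.
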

\begin{proof}
Let $\rho=\sum_n\lambda_n\ket{\psi_n}\bra{\psi_n}$ be the optimal decomposition for $\cC^{IV}_q(\rho)$. Then
$$\cC^{IV}_q(\rho)=\frac{1}{1-q}\left\{1-\sum_k\lambda_k\left(\sum_j|\bra{j}\ket{\psi_k}|^{2/q} \right)^q \right\}\ , $$
and
$$\cC_q^I(\rho)=\frac{1}{1-q}\left\{1-\left(\sum_{j}\left( \sum_k\lambda_k^q |\bra{j}\ket{\psi_k}|^{2}\right)^{1/q}\right)^q\right\}\ . $$
For $q<1$, 
$$\left( \sum_k\lambda_k^q |\bra{j}\ket{\psi_k}|^{2}\right)^{1/q}\geq \sum_k\lambda_k |\bra{j}\ket{\psi_k}|^{2/q}\ .$$
For $q>1$, the inequality is reversed. Therefore,
$$\cC_q^I(\rho)=\frac{1}{1-q}\left\{1-\left(\sum_{jk}\lambda_k |\bra{j}\ket{\psi_k}|^{2/q}\right)^q\right\}\ . $$

The statement follows from convexity of the function $\frac{1}{1-q}\{1-x^q\}$.
\end{proof}

\begin{theorem}\label{thm:II-V}
For $0<q\neq 1$,
$$\cC_q^{II}(\rho)\leq \cC_q^{V}(\rho)\ . $$
\end{theorem}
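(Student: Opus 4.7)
The plan is to leverage the fact that $\cC_q^{II}$ is itself convex in its argument, i.e.\ satisfies property (C4) as the paper asserts, while $\cC_q^V$ is by construction the convex roof of $\cC_q^{II}$ restricted to pure states. Any convex functional that coincides with $\cC_q^{II}$ on pure states must lie below this roof, which gives the desired inequality essentially for free.

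Concretely, I would fix an arbitrary pure-state decomposition $\rho = \sum_n \lambda_n \ket{\psi_n}\bra{\psi_n}$ with $\lambda_n \geq 0$, $\sum_n \lambda_n = 1$, and apply (C4) to this ensemble to obtain
\begin{equation*}
\cC_q^{II}(\rho) \;\leq\; \sum_n \lambda_n \cC_q^{II}(\psi_n).
\end{equation*}
Since the left-hand side does not depend on the chosen decomposition, I would take the infimum of the right-hand side over all pure-state decompositions of $\rho$. By the definition of $\cC_q^V$, that infimum equals $\cC_q^V(\rho)$, and the theorem follows in a single line.

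An explicit alternative, in the spirit of the proof of Theorem \ref{thm:I-IV}, would start from the optimal decomposition $\rho = \sum_n \lambda_n \ket{\psi_n}\bra{\psi_n}$ for $\cC_q^V(\rho)$, write $\cC_q^{II}(\rho) = \frac{1}{1-q}\{1 - \sum_j \bra{j}\rho^q\ket{j}^{1/q}\}$, and reduce the claim to the scalar power-mean bound $(\sum_k b_k^q)^{1/q} \geq \sum_k b_k$ for $0<q<1$ (with the reverse for $q>1$), applied to $b_k = \lambda_k |\langle j|\psi_k\rangle|^{2/q}$; the sign of $1/(1-q)$ then compensates for the reversal in the regime $q>1$. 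The main obstacle along this route is that $\bra{j}\rho^q\ket{j}$ does not cleanly equal $\sum_k \lambda_k^q |\langle j|\psi_k\rangle|^2$ unless the decomposition is orthogonal, and the $\cC_q^V$-optimal ensemble need not be orthogonal; one would have to invoke operator concavity of $x \mapsto x^q$ for $q \in (0,1)$ (and the corresponding convexity statement for $q \in (1,2]$) to pass from $\rho^q$ to a scalar combination. The convexity-based argument bypasses this subtlety entirely, so that is the route I would take.
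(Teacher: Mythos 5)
Your primary argument is correct, and it takes a genuinely different route from the paper. The paper proves the inequality by direct computation: it takes the optimal decomposition $\rho=\sum_n\lambda_n\ket{\psi_n}\bra{\psi_n}$ for $\cC_q^{V}(\rho)$, writes $\cC_q^{II}(\rho)=\frac{1}{1-q}\bigl\{1-\sum_j\bigl(\sum_k\lambda_k^q|\langle j|\psi_k\rangle|^2\bigr)^{1/q}\bigr\}$, and concludes with the scalar bound $\bigl(\sum_k a_k^q\bigr)^{1/q}\geq\sum_k a_k$ for $q<1$ (reversed for $q>1$) applied to $a_k=\lambda_k|\langle j|\psi_k\rangle|^{2/q}$ --- this is exactly your ``explicit alternative.'' As you observe, that computation silently uses $\bra{j}\rho^q\ket{j}=\sum_k\lambda_k^q|\langle j|\psi_k\rangle|^2$, which is guaranteed only for an orthogonal (eigen)decomposition, whereas the $\cC_q^{V}$-optimal ensemble need not be orthogonal; the paper does not address this point. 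Your convexity route --- apply (C4) for $\cC_q^{II}$ (asserted in the paper, following Zhao--Yu) to an arbitrary pure-state ensemble, note that $(\ket{\psi}\bra{\psi})^q=\ket{\psi}\bra{\psi}$ so that $\cC_q^{V}$ is precisely the convex roof of $\cC_q^{II}$ restricted to pure states, and then take the infimum over ensembles --- sidesteps the orthogonality issue entirely and is the cleaner argument. Its only cost is that it imports the convexity of $\cC_q^{II}$ as a black box rather than deriving the inequality from elementary scalar estimates, so it is only as solid as that cited property; if you wanted a self-contained proof you would indeed have to confront the non-orthogonal-decomposition subtlety you identified.
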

\begin{proof}
Let $\rho=\sum_n\lambda_n\ket{\psi_n}\bra{\psi_n}$ be the optimal decomposition for $\cC^{V}_q(\rho)$. Then
$$\cC^V_q(\rho)=\frac{1}{1-q}\left\{1-\sum_k\lambda_k \sum_j|\bra{j}\ket{\psi_k}|^{2/q} \right\}\ . $$
Also,
$$\cC^{II}_q(\rho)=\frac{1}{1-q}\left\{1-\sum_{j}\left( \sum_k\lambda_k^q |\bra{j}\ket{\psi_k}|^{2}\right)^{1/q}\right\}\ . $$
For $q<1$, 
$$\left( \sum_k\lambda_k^q |\bra{j}\ket{\psi_k}|^{2}\right)^{1/q}\geq \sum_k\lambda_k |\bra{j}\ket{\psi_k}|^{2/q}\ .$$
For $q>1$, the inequality is reversed.
\end{proof}

\begin{remark}
Note that there is no straight forward comparison between $\cC^{II}(\psi)$ and $\cC^{III}(\psi)$, as one is not majorized by the other for all cases. For example, in $d=2$, let us denote $x:=|\bra{1}\ket{\psi}|^2$. Then
$$\text{if } x=1/2, q=1/2,\ \text{then } \cC^{II}_{1/2}(\psi)\geq \cC^{III}_{1/2}(\psi)\ , $$
but 
$$\text{if } x=0.1, q=1/2,\ \text{then } \cC^{II}_{1/2}(\psi)\leq \cC^{III}_{1/2}(\psi)\ .$$
\end{remark}

In conclusion, here are the inequalities that we have proved:
$$\cC^I_q(\rho)\leq \cC^{IV}_q(\rho)\leq \cC^{III}_q(\rho) \ ,$$
$$ \cC^{IV}_q(\rho)\leq  \cC^{V}_q(\rho)\ , $$
$$ \cC^{I}_q(\rho)\leq  \cC^{II}_q(\rho)\leq  \cC^{V}_q(\rho)\ . $$

\section{Discord and correlation measure on pure states}

For pure states, we can explicitly calculate the minimum in both expressions for correlation and discord measures. This was proved for the correlation measure $\cQ_q$ in  \cite{L19}. 

\begin{theorem}\label{lemma:1}
Let $\rho=\ket{\psi}\bra{\psi}$ be a pure state. Let $\{\alpha_n\}$ be the Hilbert-Schmidt coefficients of $\ket{\psi}_{AB}=\sum_n\alpha_n\ket{\xi_n}_A\ket{\xi_n}_B\ .$ For $q\in(0,1)\cup(1,2]$, the discord and correlation measure can be explicitly calculated
\begin{equation}\label{eq:D}
\cD_q(\ket{\psi}\bra{\psi})=\cQ_q(\ket{\psi}\bra{\psi})=\frac{1}{1-q}\left\{1-\left(\sum_{n}\alpha_{n}^{2/q}\right)^q\right\}\geq 0\ . 
\end{equation}
\begin{equation}\label{eq:D-II}
\cD_q^{II}(\ket{\psi}\bra{\psi})=\cQ^{II}_q(\ket{\psi}\bra{\psi})=\frac{1}{1-q}\left\{1-\sum_{n}\alpha_{n}^{2/q}\right\}\geq 0\ . 
\end{equation}
\begin{equation}\label{eq:D-III}
\cD_q^{III}(\ket{\psi}\bra{\psi})=\frac{1}{1-q}\left\{\sum_{n}\alpha_{n}^{2q}-1\right\}\geq 0\ . 
\end{equation}
\end{theorem}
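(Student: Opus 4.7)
The plan is to reduce each of the three equalities to a Schur-convexity/concavity argument on an appropriate Tsallis-type functional, with the key input being a majorization of a product-basis probability distribution by the Schmidt spectrum $(\alpha_n^2)$. For \eqref{eq:D} and \eqref{eq:D-II} a one-sided doubly-stochastic majorization suffices, while \eqref{eq:D-III} requires the sharper bipartite majorization $(|\langle ij|\psi\rangle|^2)\prec(\alpha_n^2)$.

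For \eqref{eq:D}, specializing Lemma~\ref{lemma:2} to a pure state (only $\lambda_1=1$ survives) yields $N_Q(\ket{\psi}\bra{\psi})=\sum_i p_i^{1/q}$ with $p_i=\sum_n\alpha_n^2|\langle i|\xi_n\rangle|^2$. Since the matrix $(|\langle i|\xi_n\rangle|^2)_{in}$ is doubly stochastic (both $\{\ket{i}\}$ and the Schmidt basis are orthonormal), $(p_i)\prec(\alpha_n^2)$. The power $x^{1/q}$ is convex for $q<1$ and concave for $q>1$, so by Schur's criterion $\sum_i p_i^{1/q}$ is maximized (resp.\ minimized) at $p=\alpha^2$, i.e.\ when the $A$-basis is the Schmidt basis; a sign check on $1-q$ then confirms that this extremum of $N_Q$ realizes the minimum of $(1-N_Q^q)/(1-q)$ in both regimes, yielding the stated $\cQ_q$. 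The same argument with $N_Q$ in place of $N_Q^q$ gives the $\cQ_q^{II}$ formula in \eqref{eq:D-II}. For the discords, $\cC\cC\subset\cC\cQ$ gives $\cQ_q\leq\cD_q$ and Lemma~\ref{lemma:QD2} gives $\cQ_q^{II}\leq\cD_q^{II}$; for a pure state $\rho^q=\rho$, so at the Schmidt basis on both $A$ and $B$ we have $N_D=\sum_i\alpha_i^{2/q}$, matching the two $\cQ$-values and collapsing the sandwich.

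For \eqref{eq:D-III}, on a pure state $\cC_q^{III}(\psi)$ is exactly the Tsallis entropy $S_q(p)$ of the product-basis distribution $p=(|\langle ij|\psi\rangle|^2)$, and $S_q$ is Schur-concave for every $q>0$, $q\ne 1$ (its summand $\frac{x-x^q}{q-1}$ has second derivative $-qx^{q-2}<0$). Thus $\cD_q^{III}(\ket{\psi}\bra{\psi})$ is minimized at the most peaked achievable $p$, and I would prove the required bipartite majorization $(p_{ij})\prec(\alpha_n^2)$ in two stages using the fact that unital CPTP maps reduce eigenvalues in the majorization order. First, local dephasing on $A$ gives $\Phi_A(\rho_\psi)=\sum_i\ket{i}\bra{i}_A\otimes\ket{\phi_i}\bra{\phi_i}_B$ with $\ket{\phi_i}_B=\bra{i}_A\ket{\psi}$; the block-rank-one structure makes its spectrum exactly $(\rho_A(i,i))_i$, and Schur--Horn applied to $\rho_A$ gives $(\rho_A(i,i))_i\prec(\alpha_n^2)_n$. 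Second, $\Phi_B$ is unital CPTP, so $\lambda(\Phi_B\Phi_A(\rho_\psi))\prec\lambda(\Phi_A(\rho_\psi))$; since $\Phi_B\Phi_A(\rho_\psi)=\sum_{ij}p_{ij}\ket{ij}\bra{ij}$ has spectrum $(p_{ij})$, transitivity of majorization delivers $(p_{ij})\prec(\alpha_n^2)$. Evaluating $S_q$ at the Schmidt basis (where $p_{ii}=\alpha_i^2$ and all other $p_{ij}$ vanish) gives $\sum_n\alpha_n^{2q}$, proving \eqref{eq:D-III}.

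The main obstacle is this bipartite majorization: Schur--Horn applied directly to the rank-one $\rho_\psi$ only yields the useless bound $(p_{ij})\prec(1,0,\ldots,0)$, because $(\alpha_n^2)$ is the spectrum of $\rho_A$, not of $\rho_\psi$. The two-stage dephasing is precisely what transfers the Schur--Horn bound from $\rho_A$ to a bipartite bound on $\rho_\psi$; once it is in hand, the rest of the proof is a routine sign analysis on $1-q$.
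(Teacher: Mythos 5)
Your proof is correct. For (\ref{eq:D}) and (\ref{eq:D-II}) it is essentially the paper's argument in different packaging: the paper bounds $N_D(\psi)\le N_Q(\psi)=\sum_i m_i^{1/q}\le\sum_n\alpha_n^{2/q}$ for $q<1$ (reversed for $q>1$) via the pointwise estimate $|\langle j|\phi_i\rangle|^{2/q}\le|\langle j|\phi_i\rangle|^{2}$ and Jensen's inequality with the doubly stochastic weights $|\langle i|\xi_n\rangle|^{2}$ --- which is exactly the textbook proof of your Schur-convexity step --- and then closes the same $\cQ_q\le\cD_q$ sandwich at the Schmidt basis, citing \cite{L19} for the $\cQ_q$ formula where you reprove it from Lemma \ref{lemma:2}. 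Where you genuinely diverge is (\ref{eq:D-III}): you establish the full bipartite majorization $(p_{ij})\prec(\alpha_n^2)$ by composing Schur--Horn on $\rho_A$ with spectral majorization under the unital dephasing channel on $B$, and then invoke Schur concavity of the Tsallis entropy. The paper instead asserts that the $2q$-exponent inequality follows ``since'' $N_D(\psi)\le\sum_n\alpha_n^{2/q}$, but that proven inequality concerns the exponent $2/q$, not $2q$; making the paper's step rigorous requires rerunning its two-stage chain with $q$ in place of $1/q$. Your majorization statement does this once and for all and covers every Schur-concave functional (hence all three formulas) uniformly, at the modest cost of importing the standard fact that unital CPTP maps contract spectra in the majorization order. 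One small housekeeping point that applies to both proofs: when the Schmidt rank is less than $d$, the matrix $(|\langle i|\xi_n\rangle|^{2})_{in}$ is doubly stochastic only after the Schmidt vectors are padded to a full orthonormal basis with $\alpha_n=0$, which is harmless but worth stating.
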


\begin{proof}

Since the expression (\ref{eq:D}) was proved for the correlation measure in  \cite{L19}, and
\begin{equation}\label{eq:QD}
\cQ_q(\rho)\leq \cD_q(\rho)\ , 
\end{equation}
we need to show that for some particular basis, the discord has the same expression. By definition of $N_D$ (\ref{eq:N-def}),  
$$N_D(\psi)=\sum_{ij}|\bra{ij}\ket{\psi}|^{2/q}=\sum_{ij}\left|\sum_n \alpha_{n}\bra{i}\ket{\xi_{n}}\bra{j}\ket{\xi_{n}} \right|^{2/q}\ .$$
 Choose the orthonormal basis $\ket{i}_A:=\ket{\xi_i}_A$ on system $A$ and  $\ket{j}_B:=\ket{\xi_j}_B$ on system $B$. Then, we have 
 \begin{equation}\label{eq:N-a}
 N_D(\psi)=  \sum_{n}  \alpha_{n}^{2/q}\ .
 \end{equation}
 By (\ref{eq:D-1}), this leads to the first statement of the Theorem.
 
For the next two expressions, we recall the proof of (\ref{eq:QD}) for pure states, and modify it later as necessary for other discord and correlation measures.
From (\ref{eq:N-def}) for a pure state
$$N_D(\psi)=\sum_{ij}(\bra{ij}\ket{\psi}\bra{\psi}\ket{ij})^{1/q}=:\sum_{ij}\bra{j} R_i \ket{j}^{1/q}\ , $$
where
$$R_i=\left(\bra{i}\otimes I\right) \ket{\psi}\bra{\psi} \left(\ket{i}\otimes I\right)\ . $$
Clearly, $R_i$ is a projector in system $B$. Denote
$$m_i:=\sum_n\alpha_n^2|\bra{i}\ket{\xi_n}|^2 \ ,
\text{ and }\ 
\ket{\phi_i}:=\frac{1}{\sqrt{m_i}} \bra{i}\otimes I \ket{\psi}\ .$$
Note that $\bra{\phi_i}\ket{\phi_i}=1$ for every $i$. Then
$$R_i=m_i\ket{\phi_i}\bra{\phi_i}\ . $$
Therefore,
$$N_D(\psi)=\sum_i m_i^{1/q}\sum_j |\bra{j}\ket{\phi_i}|^{2/q}\ .$$
For $q\in(0,1)$, for every $j$, $|\bra{j}\ket{\phi_i}|^{2/q}\leq |\bra{j}\ket{\phi_i}|^{2}$. Therefore from Lemma \ref{lemma:2} we obtain the last equality
$$N_D(\psi)\leq  \sum_i m_i^{1/q}\sum_j |\bra{j}\ket{\phi_i}|^{2}=\sum_i m_i^{1/q}=\sum_i \left( \sum_n\alpha_n^2|\bra{i}\ket{\xi_n}|^2\right)^{1/q}=N_Q(\psi)\ .$$
For $q\in(0,1)$, the function $x^{1/q}$ is convex. Therefore,  
\begin{equation}\label{eq:NDQ}
N_D(\psi)\leq N_Q(\psi)\leq \sum_i\sum_n\alpha_n^{2/q}|\bra{i}\ket{\xi_n}|^2= \sum_n\alpha_n^{2/q}\ .
\end{equation}
For $q>1$, the above inequality is reversed. Since $x^q$ is monotone increasing for all $q$, and together with the pre-factor of $\frac{1}{1-q}$, we obtain (\ref{eq:QD}). 

For the second inequality recall that
$$\cD^{II}_q(\rho)=\min\frac{1-N_D(\rho)}{1-q}\ , $$
and $$\cQ^{II}_q(\rho)=\min\frac{1-N_Q(\rho)}{1-q}\ .$$
so the proof just carries as is from (\ref{eq:N-a}) and (\ref{eq:NDQ}).

For the third inequality, recall that 
$$\cD_q^{III}(\psi)=\min_{\ket{ij}}\cC^{III}_q(\psi)=\min \frac{1}{1-q}\left\{\sum_{ij}|\langle{ij}|{\psi}\rangle|^{2q} -1\right\}\ .$$
Since for all $q\in(0,1)$ we have $N_D(\psi)=\sum_{ij}|\langle{ij}|{\psi}\rangle|^{2/q}\leq \sum_n\alpha_n^{2/q}$, and for $q>1$, $N_D(\psi)\geq \sum_n\alpha_n^{2/q}$, we obtain for $q\in(0,1)\cup(1,2]$
$$\frac{1}{1-q}\left\{\sum_{ij}|\langle{ij}|{\psi}\rangle|^{2q}-1\right\}\geq \frac{1}{1-q}\left\{\sum_n\alpha_n^{2q}-1 \right\}\ . $$

For all the cases, the inequality is reached when the basis is taken to be the Hilbert-Schmidt basis of state $\ket{\psi}$.

\end{proof}

\section{Bounds on quantum discord and correlation measure}

Here we present the tight lower and upper bounds for a quantum discord for mixed states.

\begin{theorem}\label{thm:lower}
For a mixed state $\rho$ and $q\in(0,1)\cup(1,2]$, the discord and correlation measure are tightly lower bounded by
\begin{equation}\label{eq:lower}
\cD_q(\rho)\geq\cQ_q(\rho)\geq \frac{1}{1-q}\left\{1-\left(\sum_{kn}\lambda_k\alpha_{kn}^{2/q} \right)^q \right\}\ ,
\end{equation}
\begin{equation}\label{eq:lower2}
\cD^{II}_q(\rho)\geq\cQ^{II}_q(\rho)\geq \frac{1}{1-q}\left\{1-\sum_{kn}\lambda_k\alpha_{kn}^{2/q} \right\}\ .
\end{equation}
Both inequalities are saturated on any pure state, where we recover the equality (\ref{eq:D}).
\end{theorem}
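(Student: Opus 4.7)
The first inequalities $\cD_q(\rho)\geq \cQ_q(\rho)$ and $\cD_q^{II}(\rho)\geq \cQ_q^{II}(\rho)$ are immediate from the inclusion $\cC\cC\subset \cC\cQ$, as already observed after (\ref{eq:corr}) and in Lemma \ref{lemma:QD2}. So the real work is to produce the basis-independent lower bound on $\cQ_q(\rho)$ and $\cQ_q^{II}(\rho)$. By (\ref{eq:Q}) and Definition \ref{def:Q2}, this reduces to controlling the quantity $N_Q(\rho)$: for every choice of local basis $\{\ket{i}_A\}$, I need $N_Q(\rho)\leq B(\rho):=\sum_{kn}\lambda_k\alpha_{kn}^{2/q}$ when $q\in(0,1)$, and the reverse inequality when $q\in(1,2]$.

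To establish this key bound, I start from the formula of Lemma \ref{lemma:2}, $N_Q(\rho)=\sum_{ik}\lambda_k m_{ik}^{1/q}$ with $m_{ik}=\sum_n\alpha_{kn}^2|\langle i|\xi_{kn}\rangle|^2$, and apply Jensen's inequality to $f(x)=x^{1/q}$, which is convex for $q\in(0,1)$ and concave for $q\in(1,2]$. I treat $w_n:=|\langle i|\xi_{kn}\rangle|^2$ as a probability distribution in $n$: after the harmless step of extending each Schmidt basis $\{\ket{\xi_{kn}}_A\}_n$ to a full orthonormal basis of $\cH_A$ (padding with vanishing Schmidt coefficients), this set is complete, so $\sum_n w_n=1$. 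Jensen then gives
$$m_{ik}^{1/q}=\left(\sum_n w_n\,\alpha_{kn}^2\right)^{1/q}\leq \sum_n w_n\,\alpha_{kn}^{2/q}\qquad (q<1),$$
with the reverse inequality for $q>1$. Summing over $i$ collapses $\sum_i w_n=\sum_i|\langle i|\xi_{kn}\rangle|^2=1$ (by orthonormality of $\{\ket{i}\}$), giving $\sum_i m_{ik}^{1/q}\leq \sum_n\alpha_{kn}^{2/q}$; weighting by $\lambda_k$ and summing over $k$ yields $N_Q(\rho)\leq B(\rho)$ for $q\in(0,1)$, and the reverse for $q\in(1,2]$, independently of the basis chosen.

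Combining this with the definitions: for $q\in(0,1)$, both $N_Q^q\leq B^q$ and $1-q>0$ push $\frac{1-N_Q^q}{1-q}\geq \frac{1-B^q}{1-q}$; for $q\in(1,2]$, the two reversals ($N_Q^q\geq B^q$ and $1-q<0$) cancel and yield the \emph{same} bound. Taking the minimum over bases preserves the inequality, proving (\ref{eq:lower}); an identical bookkeeping applied to $\frac{1-N_Q}{1-q}$ proves (\ref{eq:lower2}). For saturation, on a pure state $\rho=\ket{\psi}\bra{\psi}$ only one $k$-term contributes with $\lambda_k=1$, so $B(\rho)$ collapses to $\sum_n\alpha_n^{2/q}$, matching the pure-state expressions from Theorem \ref{lemma:1}. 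The main delicate point in the whole argument is the two-regime sign bookkeeping — Jensen flips direction at $q=1$ and so does $1/(1-q)$ — and I have to check that these flips are coordinated so that the single lower bound in (\ref{eq:lower})--(\ref{eq:lower2}) holds uniformly across both ranges of $q$.
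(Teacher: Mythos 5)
Your proposal is correct and follows essentially the same route as the paper: it starts from the decomposition $N_Q(\rho)=\sum_k\lambda_k N_Q(\psi_k)$ of Lemma \ref{lemma:2}, bounds each pure-state term via convexity/concavity of $x^{1/q}$ (the paper invokes the inequality (\ref{eq:NDQ}) already established in Theorem \ref{lemma:1}, which is exactly your Jensen step), and then transfers the bound to $\cD_q$ and $\cD_q^{II}$ via Lemma \ref{lemma:QD2} and (\ref{eq:QD}). Your explicit treatment of the two-regime sign bookkeeping and the padding of the Schmidt basis are careful elaborations of details the paper leaves implicit, but they do not constitute a different argument.
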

\begin{proof}
From Lemma \ref{lemma:2},
$$N_Q(\rho)=\sum_{ik}\lambda_k\left(\sum_n\alpha_{kn}^2|\bra{i}\ket{\xi_{kn}}|^2 \right)^{1/q}=\sum_k\lambda_kN_Q({\psi_k}) \ .
 $$
 By (\ref{eq:NDQ}) for $q\in(0,1)$, we have
  $$N_Q(\rho)\leq \sum_{kn}\lambda_k\alpha_{kn}^{2/q}\ . $$
  From (\ref{eq:Q}) and Definition \ref{def:Q2}, 
  $$\cQ_q(\rho)=\min\frac{1-N_Q^q(\rho)}{1-q}\ , $$
  and 
  $$\cQ^{II}_q(\rho)=\min\frac{1-N_Q(\rho)}{1-q}\ .$$
  we obtain the statement of the Theorem for both correlation measures. Similarly, the inequality holds for $q\in(1,2]$. From Lemma \ref{lemma:QD2} and (\ref{eq:QD}), the inequalities for discord hold.

\end{proof}

The following upper bound was derived  in \cite{R16} for the coherence measure $\cQ_q$, which gives the upper bound for the discord $\cD_q$ automatically. Here we provide the full proof of that part for completeness, and show that the upper bound is tight for the discord and correlation measures.

\begin{theorem}\label{thm:upper}
For a mixed state $\rho$, and $q\in(0,1)\cup(1,2]$, the discord and correlation measure are tightly upper bounded by
\begin{equation}\label{eq:upper}
\cQ_q(\rho)\leq\cD_q(\rho)\leq \frac{1}{1-q}\left\{1-d^{2(q-1)} \Tr\rho^q \right\}\leq  \frac{1}{1-q}\left\{1-d^{2(q-1)} \right\}\ ,
\end{equation}
\begin{equation}\label{eq:upper2}
\cQ^{II}_q(\rho)\leq\cD^{II}_q(\rho)\leq \frac{1}{1-q}\left\{1-d^{2(q-1)/q} (\Tr\rho^q)^{1/q} \right\}\leq  \frac{1}{1-q}\left\{1-d^{2(q-1)/q} \right\}\ .
\end{equation}
Here $d_A=d_B=d$.
All inequalities are  saturated for a pure state $\rho=\ket{\psi}\bra{\psi}$ with Hilbert-Schmidt coefficients of $1/d$, i.e. there is a basis $\{\ket{\xi_i\xi_j}_{AB}\}$ such that
$$\ket{\psi}=\sum_{ij}\frac{1}{d}\ket{\xi_i\xi_j}\ . $$
\end{theorem}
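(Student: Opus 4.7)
The plan is to exhibit the maximally mixed state $\delta_*:=\idty/d^2$ as a single explicit trial element of $\cC\cC$ (in any product basis it equals $d^{-2}\sum_{i,j}\ket{ij}\bra{ij}$) and use it to upper bound both minimizations. This reduces both inequality chains to a computation of Tsallis relative entropy at $\delta_*$ together with two elementary comparisons between $\Tr\rho^q$ and $1$.

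For the first chain \eqref{eq:upper}, the inequality $\cQ_q(\rho)\leq\cD_q(\rho)$ is immediate from $\cC\cC\subset\cC\cQ$. Using $\delta_*^{1-q}=d^{2(q-1)}\idty$ I would compute
\begin{equation*}
\cD_q(\rho)\leq D_q(\rho\|\delta_*)=\frac{1}{q-1}\bigl(d^{2(q-1)}\Tr\rho^q-1\bigr)=\frac{1}{1-q}\bigl(1-d^{2(q-1)}\Tr\rho^q\bigr),
\end{equation*}
which supplies the middle bound. For the outer bound I would invoke that for $q\in(0,1)$ the map $x\mapsto x^q$ satisfies $x^q\geq x$ on $[0,1]$, whence $\Tr\rho^q\geq 1$ for any density matrix, while for $q\in(1,2]$ the reversed bound $\Tr\rho^q\leq 1$ holds. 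Combining each case with the sign of the prefactor $1/(1-q)$ (positive for $q<1$, negative for $q>1$) yields $\frac{1}{1-q}\bigl(1-d^{2(q-1)}\Tr\rho^q\bigr)\leq\frac{1}{1-q}\bigl(1-d^{2(q-1)}\bigr)$ in both regimes.

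The second chain \eqref{eq:upper2} follows the same recipe. The inequality $\cQ^{II}_q(\rho)\leq\cD^{II}_q(\rho)$ is Lemma~\ref{lemma:QD2}. Substituting $\delta_*$ into Definition~\ref{def:D2} and taking the $q$-th root of $\Tr(\rho^q\delta_*^{1-q})=d^{2(q-1)}\Tr\rho^q$ gives $\cD^{II}_q(\rho)\leq\frac{1}{1-q}\bigl(1-d^{2(q-1)/q}(\Tr\rho^q)^{1/q}\bigr)$. The outer inequality then reduces to the same comparison of $\Tr\rho^q$ with $1$ composed with the strictly increasing map $x\mapsto x^{1/q}$, so the sign analysis is identical.

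For the tightness claim on the pure state $\ket{\psi}=d^{-1}\sum_{ij}\ket{\xi_i\xi_j}$, the outer inequalities collapse to equalities the moment $\rho$ is pure (since then $\Tr\rho^q=1$), and the middle inequalities collapse exactly when $\delta_*$ is optimal for the corresponding minimization; I would verify this by plugging the state into the pure-state formulas of Theorem~\ref{lemma:1} in the basis $\{\ket{\xi_i\xi_j}\}$ and matching against $\frac{1}{1-q}(1-d^{2(q-1)})$ and $\frac{1}{1-q}(1-d^{2(q-1)/q})$. The main obstacle I anticipate is not analytical but organizational: the direction of the outer inequalities flips between $q\in(0,1)$ and $q\in(1,2]$ because both the sign of $1/(1-q)$ and the direction of the inequality between $\Tr\rho^q$ and $1$ reverse, so one must ensure that the two reversals compose to yield the single claimed direction in both regimes without a sign error.
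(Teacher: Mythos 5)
Your treatment of the two inequality chains is correct and is essentially the paper's own argument: the paper likewise takes the maximally mixed state $\delta=\idty/d^2$ as a trial member of $\cC\cC$, computes $D_q(\rho\|\delta)=\frac{1}{1-q}\{1-d^{2(q-1)}\Tr\rho^q\}$, and disposes of the outer inequalities by comparing $\Tr\rho^q$ with $1$ in the two regimes of $q$. Your derivation of the bound on $\cD_q^{II}$ by substituting $\delta$ directly into the minimization of Definition \ref{def:D2} is marginally more self-contained than the paper's remark that $\cD_q$ and $\cD_q^{II}$ are optimized over the same basis, but the two routes are equivalent.

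The step you postponed --- ``verify by plugging the state into the pure-state formulas of Theorem \ref{lemma:1} and matching against $\frac{1}{1-q}(1-d^{2(q-1)})$'' --- is exactly where the argument fails, and you should not expect the check to go through. The vector $\ket{\psi}=d^{-1}\sum_{ij}\ket{\xi_i\xi_j}=\bigl(d^{-1/2}\sum_i\ket{\xi_i}\bigr)\otimes\bigl(d^{-1/2}\sum_j\ket{\xi_j}\bigr)$ is a \emph{product} state; its Schmidt (Hilbert-Schmidt) decomposition has a single coefficient $\alpha_1=1$, not $d^2$ coefficients equal to $1/d$ (and a single-sum Schmidt decomposition with $d$ coefficients equal to $1/d$ would not even be normalized). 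Theorem \ref{lemma:1} therefore gives $\cD_q(\ket{\psi}\bra{\psi})=\cQ_q(\ket{\psi}\bra{\psi})=0$ for this state, whereas the claimed saturating value $\frac{1}{1-q}\{1-d^{2(q-1)}\}$ is strictly positive for $d>1$. In fact no pure state saturates the middle inequality: for $q\in(0,1)$ the Schmidt coefficients satisfy $\sum_n\alpha_n^{2/q}\geq d^{(q-1)/q}>d^{2(q-1)/q}$, with the first inequality attained by the maximally entangled state $\alpha_n=1/\sqrt d$, so $\cD_q(\psi)\leq\frac{1}{1-q}\{1-d^{q-1}\}<\frac{1}{1-q}\{1-d^{2(q-1)}\}$, and the analogous strict inequality holds for $q\in(1,2]$ after accounting for the sign of $1/(1-q)$. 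This defect is inherited from the paper, whose proof makes the same identification of the $d^2$ expansion coefficients $1/d$ with Schmidt coefficients, so your plan faithfully reproduces the published argument --- but the saturation claim cannot be verified as stated, and carrying out the computation you propose will expose the discrepancy rather than confirm it.
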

\begin{proof}
Let $\delta=\frac{1}{d^2}I$ be a completely mixed  state. Then
$$D_q(\rho\|\delta)=\frac{1}{1-q}\left\{1-d^{2(q-1)} \Tr\rho^q \right\}\geq \cD_q(\rho)\ . $$
From Lemma \ref{lemma:1}, the inequality is achieved for the pure state $\rho=\ket{\psi}\bra{\psi}$, with Hilbert-Schmidt coefficients of $1/d$, i.e. there is a basis $\{\ket{\xi_i\xi_j}_{AB}\}$ such that
$$\ket{\psi}=\sum_{ij}\frac{1}{d}\ket{\xi_i\xi_j}\ . $$

The second inequality holds since $\Tr\rho^q\geq \Tr\rho=1$ for $q\in(0,1)$, and $\Tr\rho^q\leq \Tr\rho=1$ for $q>1$, with equality on any pure state. 

Since the minimum for both  $\cD_q$ and $\cD^{II}_q$ is achieved on the same basis, the inequality for $\cD^{II}_q$ follows. From Lemma \ref{lemma:QD2} and (\ref{eq:QD}), the inequalities for correlation measures hold.
\end{proof}

\begin{remark}
Consider a pure state $\rho=\ket{\psi}\bra{\psi}$ with
$\ket{\psi}=\sum_{ij}\frac{1}{d}\ket{ij}, $
where $\{\ket{ij}_{AB}\}_{ij}$ being any basis in systems $A$ and $B$. For this state the equality (\ref{eq:D}), the lower (\ref{eq:lower}) and the upper (\ref{eq:upper}) bounds become the same and equal to 
$$\cQ_q(\ket{\psi}\bra{\psi})=\cD_q(\ket{\psi}\bra{\psi})=\frac{1}{1-q}\left\{1- d^{2(q-1)} \right\}\ . $$
Moreover, for the fixed bases $\{\ket{i}_A\}_i$ and $\{\ket{j}_B\}_j$ on systems $A$ and $B$, used in the coherence, from (\ref{eq:Ts-1}), the maximally mixed state $\ket{\psi}=\sum_{ij}\frac{1}{d}\ket{ij}$,  has equal coherence, discord, and correlation measure
$$\cC^I_q(\ket{\psi}\bra{\psi})=\frac{1}{1-q}\left\{1- d^{2(q-1)} \right\}=\cD_q(\ket{\psi}\bra{\psi})=\cQ_q(\ket{\psi}\bra{\psi})\ . $$
Similar expressions hold for $\cC^{II}_q$, $\cD^{II}_q$ and $\cQ_q^{II}$.

\end{remark}

\vspace{0.3in}
\textbf{Acknowledgments.}  A. V. is supported by NSF grant DMS-1812734.

\end{document}